\begin{document}
\title{Structured Random Models for Phase Retrieval with Optical Diffusers}
\author{Zhiyuan Hu, Fakhriyya Mammadova, Julian Tachella, Michael Unser, Jonathan Dong
\thanks{This work was supported by the Swiss National Science Foundation under Grant
PZ00P2\_216211. Julian Tachella acknowledges support by the French ANR grant UNLIP (ANR-23-CE23-0013). \textit{(Corresponding Author: Zhiyuan Hu.)} \\

Zhiyuan Hu, Fakhriyya Mammadova, Michael Unser and Jonathan Dong are with the Biomedical Imaging Group, École polytechnique fédérale de Lausanne, 1015, Switzerland. (e-mail: zhiyuan.hu@epfl.ch; faxriyya489@kaist.ac.kr; michael.unser@epfl.ch; jonathan.dong@epfl.ch) \\


Julian Tachella is with LPENSL, CNRS \& ENS Lyon, 46 allée de l'Italie, 69364, France. (e-mail: julian.tachella@cnrs.fr)

}}

\maketitle

\begin{abstract}
    Phase retrieval is a nonlinear inverse problem that arises in a wide range of imaging modalities, from electron microscopy to Fourier ptychography. 
    In particular, the reconstruction is facilitated when the sensing matrix is i.i.d. random, enabling strong theoretical guarantees and efficient reconstruction algorithms.
    However, its applicability is restricted by excessive computational costs.
    In this paper, we propose structured random models for phase retrieval, where we emulate a dense random matrix by a cascade of structured transforms and random diagonal matrices. 
    We reduce the complexity from quadratic to log-linear at no cost in reconstruction performance.
    Through a spectral method initialization followed by gradient descent, robust reconstruction is obtained at an oversampling ratio as low as 2.8. 
    Moreover, we observe that the reconstruction performance is solely determined by the singular-value distribution of the forward matrix. 
    This class of models can directly be implemented with basic optical elements such as lenses and diffusers, paving the way for large-scale phase imaging with robust reconstruction guarantees.
\end{abstract}

\begin{IEEEkeywords}
Compressed sensing, structured transforms, random matrix theory, nonlinear optimization.
\end{IEEEkeywords}

\section{Introduction} \label{sec:intro}
Phase retrieval refers to reconstructing the phase information of a complex-valued field from amplitude-only measurements. 
It enables the complete characterization of the electric field for label-free imaging and wavefront engineering and pertains to modalities such as astronomy~\cite{fienup1987phase}, crystallography~\cite{sayre1952some}, computer-generated holography~\cite{zhang20173d,eybposh2020deepcgh}, optical computing~\cite{gupta2019don}, Fourier ptychography~\cite{yeh2015experimental}, electron and optical microscopy~\cite{zheng2013wide,boniface2020non}. 
However, the nonlinearity introduced by amplitude-only measurements significantly complicates the reconstruction problem compared to linear models.
One key parameter to quantify the reconstruction difficulty is the oversampling ratio (OR), defined as the ratio between the dimension of the measurements and that of the unknown signal. 
While a high OR facilitates reconstruction, it comes at the cost of additional measurements. Thus, reconstruction under a low OR is desirable.

Existing phase-imaging modalities typically involve Fourier transforms~\cite{dong2023phase}. 
The earliest occurrence, Fourier phase retrieval~\cite{fienup1982phase}, reconstructs an image based solely on its Fourier magnitude.
This challenging reconstruction problem is made easier when more measurements become available, for instance, through shifts of the source position in ptychography~\cite{paxman1992joint,rodenburg2019ptychography} or through several illumination angles in Fourier ptychography~\cite{yeh2015experimental,bian2015fourier}.
These systems result in a sensing matrix composed of Fourier and diagonal matrices, which enables high-resolution imaging thanks to efficient computation. 
Despite the observation that more measurements lead to an inverse problem that is easier to solve, theoretical guarantees are lacking to ensure proper reconstructions, even given many measurements. 
Additionally, algorithms can become trapped in a suboptimal local minimum due to the nonlinear optimization landscape.

Meanwhile, random phase retrieval~\cite{metzler2017coherent}, where the forward operator is an i.i.d. random matrix, benefits from a rich theoretical foundation. 
Previous studies have established information-theoretic bounds on the reconstruction accuracy as a function of the OR~\cite{mondelli2019fundamental, maillard2020phase}, deriving a perfect-recovery threshold at an OR slightly above 2. 
In practice, efficient dedicated algorithms have also been proposed, for instance, approximate message passing (AMP) \cite{metzler2017coherent,maillard2020phase,schniter2016vector} and spectral methods ~\cite{guizar2008phase,ma2021spectral,luo2019optimal}. 
Although corresponding to the physical propagation of waves in a complex medium~\cite{metzler2017coherent,popoff2010measuring}, the random model is rarely implemented in practice due to its prohibitively high computational cost. 
The $\mathcal{O}(n^2)$ dense matrix-vector multiplication required to evaluate the forward model makes reconstruction intractable in both space and time for high-resolution imaging. 

Interestingly, many recent examples converge toward the replacement of a dense matrix by a fast linear transform. 
A linear transform of the form $\mathbf{H}\mathbf{D}$ is employed in compressed sensing \cite{oymak2018isometric} and locality-sensitive hashing \cite{ailon2009fast}.
It corresponds to a subsampled Hadamard transform associated with a random diagonal matrix, an efficient structured transform that can be evaluated in $\mathcal{O}(n \log n)$.
In nonlinear models such as neural networks~\cite{moczulski2016acdc, dong2020reservoir,d2025comparison} or random features \cite{yu2016orthogonal}, a sequence of efficient transforms (e.g., $\mathbf{HDHD}$) is used to emulate dense connections. 
These transforms occur naturally in optics, as they correspond to operations performed by lenses and planar interfaces. 
Moreover, the multiplication by a random diagonal matrix can be implemented with optical diffusers. 
They have been introduced in lensless imaging~\cite{antipa2017diffusercam, boominathan2020phlatcam, bezzam2025towards}; yet, their applicability to phase imaging has not been investigated.

In this paper, we propose a structured random model to facilitate phase retrieval through a cascade of structured transforms and random diagonal matrices. 
We demonstrate that the structured random model enables the same reconstruction quality as dense random models, and is much more computationally efficient. 
We present that a model architecture composed of two structured transforms and diagonal operations is the minimal setting in which a dense random matrix can be emulated. 
We also show that the singular-value distribution of the model solely determines the reconstruction performance. 
In practice, this system can be implemented optically with lenses and diffusers. 
This framework is highly flexible, as the results remain unchanged with other diffuser phase distributions and structured transforms. 

The organization of this paper is as follows: in Section~\ref{sec:srpr}, we define structured random models for phase retrieval and the corresponding optical systems, and analyze the covariance patterns of the elements in the sensing matrices. In Section~\ref{sec:methods}, we provide more details on the reconstruction configuration, including algorithms (gradient descent and spectral methods) and evaluation metrics. In Section~\ref{sec:results}, we present the evaluation results of the structured random model.

\section{Structured Random Model} \label{sec:srpr}
In this section, we show that at least 2 structured transforms and random diagonal matrices are needed to obtain a sensing matrix with uncorrelated elements.

\subsection{Model Definition}

The forward model of phase retrieval is
\begin{equation}\label{def:phase_retrieval}
   \measure = | \sensing \signal |^2, 
\end{equation}
with  $\signal \in \mathbb{C}^n$ the signal to be reconstructed, $\sensing \in \mathbb{C}^{m \times n}$ the sensing matrix, and $\measure \in \mathbb{R}^n$ the measurements.
The linear operator $\sensing$ depends on the phase-imaging modalities.
In the context of dense random models, the elements in $\sensing$ are i.i.d. sampled from a complex-valued Gaussian distribution.
In ptychography and Fourier ptychography, the forward matrix is a vertical concatenation of $\fourier \diag$ and $\fourier \diag \fourier$, respectively, where $\fourier$ stands for a Fourier matrix and $\diag$ a diagonal matrix.
Existing imaging modalities often leverage Fourier transforms and diagonal matrices, as these operations correspond to the ones performed by optical lenses and thin transmissive elements, respectively. 

For the structured random model, we define the forward matrix as
\begin{equation} \label{def:sr}
    \sensing = \unders \left( \prod_{i=1}^L \trans \diag_i \right) [\trans] \overs, 
\end{equation}
where $\diag_i \in \mathbb{C}^{p \times p}$ denotes a diagonal matrix with i.i.d. random diagonal elements of intermediate dimension $p = \max(m,n)$. 
The amplitude and phase of the $j$th diagonal element $d^{(i)}_j = r^{(i)}_j \mathrm{e}^{\mathrm{j}\theta^{(i)}_j}$ of $\diag_i$ are independently sampled for $j = 1, \ldots, p$.

The matrix $\trans$ represents the discrete Fourier transform with complex entries
\begin{equation} \label{eq:dft}
    \trans = [f_{kl}] = \frac{1}{\sqrt{p}}\mathrm{e}^{\j \frac{2 \pi}{p} kl},
\end{equation}
where $k,l$ represent the row and column indices of the entry, respectively.

The undersampling $\unders = \begin{bmatrix} \mathbf{I}_m, \mathbf{0}\end{bmatrix} \in \complex^{m \times p}$ and oversampling $\overs = \begin{bmatrix} \mathbf{I}_n, \mathbf{0}\end{bmatrix}^\T \in \complex^{p \times n}$ matrices represent possible trimming and zero-padding operations. 

We define $L$ as the depth (i.e., the number of layers) of the model in Eq.~\ref{def:sr}, with an optional Fourier transform $[\trans]$ as an additional 0.5 layer. 
Two important configurations are the \textbf{1.5-layer} and \textbf{2-layer} structured random transforms, with the respective transform sequence
\begin{align}
    \sensing_{1.5} &= \unders \trans \diag_1 \trans \overs, \\
    \sensing_{2} &= \unders \trans \diag_1 \trans \diag_2 \overs.
\end{align}

For greater generality, the Fourier transform in (\ref{def:sr}) can be replaced by another structured transform: a dense linear transform that can be evaluated in sub-$\mathcal{O}(n^2)$ complexity.
The intuition behind the structured random model stems from the understanding that each of its two key components emulates dense random matrices. In particular, the diagonal matrices introduce randomness while the structured transforms mix the different elements. 

The structured random model can be realized practically. We illustrate in Figure~\ref{fig:optics} the optical systems for both 1.5-layer and 2-layer configurations. 
Each Fourier transform corresponds to a lens and each diagonal matrix to a diffuser.
This hardware setup is simple to build and to calibrate in practice. 

We shall use random diagonal matrices $\diag_i$ with: $\mathbb{E} \left[ \left(r^{(i)}_j\right)^2 \right] = 1$ for asymptotic energy conservation and $\mathbb{E}\left[d^{(i)}_j\right] = 0$.

\begin{figure}[tbp!]
    \centering
    \begin{subfigure}{1.0\linewidth}
        \centering
        \includegraphics[width=0.8\textwidth]{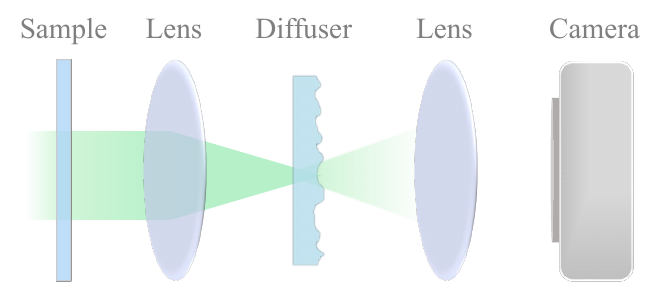}
        \caption{Optical implementation of a 1.5-layer model}
    \end{subfigure}
    \begin{subfigure}{1.0\linewidth}
        \centering
        \includegraphics[width=1.0\textwidth]{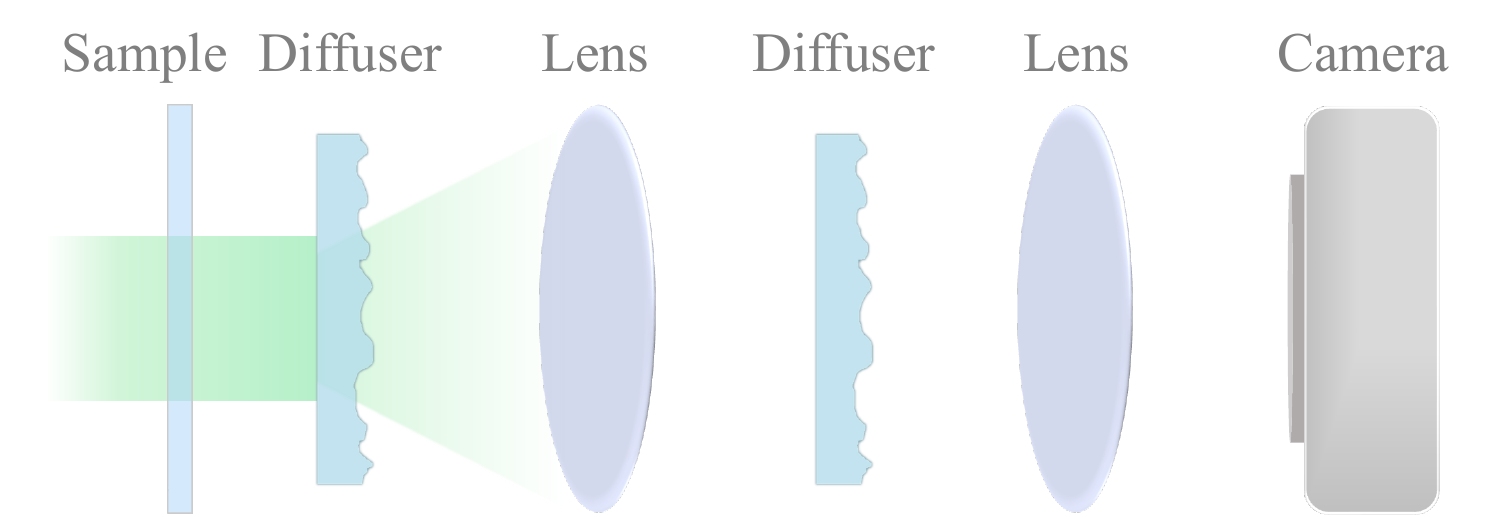}
        \caption{Optical implementation of a 2-layer model}
    \end{subfigure}
    \caption{Optical implementations of structured random models with 1.5 and 2 layers, with each structured transform corresponding to a lens and each random diagonal matrix to a diffuser.}
    \label{fig:optics}
\end{figure}

\subsection{Covariance Analysis}
While the design in (\ref{def:sr}) permits an arbitrary model depth, the determination of the optimal depth remains crucial from a practical perspective due to the increased complexity inherent with deeper models. This motivates two fundamental questions: (i) whether the structured random model can replicate the statistical properties of a dense random model; and (ii) how many layers are required to obtain such equivalence. To analyze the randomness of the model, we study the covariance between the elements of the sensing matrix $\sensing$ as a function of depth. This second-order metric captures the lowest-order dependence between random variables. Moreover, for Gaussian random variables, the multivariate distribution is completely characterized by its first- and second-order moments. 

We show in the following that the covariances exhibit distinct patterns. For the computation, we denote the elements in the structured transform and random diagonal matrix as $\trans= [f_{ij}] \in \complex^{N \times N}$ and $\diag = \operatorname{diag}([d_i]) \in \complex^{N \times N}$. All results are derived for the case of a square matrix. Since undersampling and oversampling correspond to subsampled versions of this matrix, the extension of the derivation to the general non-square case follows directly.

We now formulate the covariance pattern of a dense i.i.d. matrix and consider the three particular cases that correspond to 1 layer, 1.5 layers, and 2 or more layers.
\begin{proposition}
    For a dense matrix containing independent elements with variance $1/N$, the covariance between two elements at position $(m,n)$ and $(k,l)$ is
    \begin{align}
        \operatorname{Cov}(a_{mn}, a_{kl})
    = \begin{cases} 
        \frac{1}{N}, & \text{if $m=k$ and $n=l$} \\
        0, & \text{otherwise}.
        \end{cases}
    \end{align}
\end{proposition}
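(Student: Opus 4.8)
The plan is to reduce the statement directly to its two defining hypotheses---pairwise independence of the entries and a common per-entry variance of $1/N$---so that essentially no computation is required beyond unpacking the definition of covariance. First I would fix the convention for complex-valued entries and write $\operatorname{Cov}(a_{mn}, a_{kl}) = \mathbb{E}\!\left[(a_{mn} - \mathbb{E}[a_{mn}])\,\overline{(a_{kl} - \mathbb{E}[a_{kl}])}\right]$, which specializes to the ordinary variance when the two index pairs coincide.

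Next I would split into two cases according to whether the index pairs agree. In the diagonal case $(m,n) = (k,l)$, the covariance is by definition $\operatorname{Var}(a_{mn}) = \mathbb{E}\!\left[|a_{mn} - \mathbb{E}[a_{mn}]|^2\right]$, which equals $1/N$ by the hypothesis that each entry has variance $1/N$. In the off-diagonal case $(m,n) \neq (k,l)$, the two entries are distinct and therefore independent; independence lets the expectation factor as $\mathbb{E}[a_{mn}\,\overline{a_{kl}}] = \mathbb{E}[a_{mn}]\,\overline{\mathbb{E}[a_{kl}]}$, from which the covariance vanishes identically. I would note that the zero-mean property is not needed for this step: the cancellation follows from factorization alone, regardless of the common mean.

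The main obstacle here is bookkeeping rather than substance, since the result is in effect a restatement of the independence assumption. The one point demanding care is the complex-conjugation convention in the covariance of complex random variables; one must conjugate the second argument so that the diagonal entries return the real, nonnegative variance $1/N$ rather than a pseudo-covariance. Once this convention is fixed, both cases follow immediately and the piecewise formula in the statement is obtained. This proposition then serves as the baseline against which the covariances of the $1.5$- and $2$-layer structured models are compared in the subsequent derivations.
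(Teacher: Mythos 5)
Your proof is correct: the paper states this proposition as a reference baseline and omits the proof entirely, and your two-case argument (variance hypothesis on the diagonal, factorization of $\mathbb{E}[a_{mn}\overline{a_{kl}}]$ by independence off the diagonal) is exactly the standard argument that is being taken for granted. Your observation that the zero-mean assumption is not needed and your care with the conjugation convention are both accurate.
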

\subsubsection{One Layer}
\begin{proposition} \label{theorem:1-layer}
For a 1-layer model, the covariance between two elements in the forward matrix is given by
\begin{equation}
    \operatorname{Cov}(a_{mn},a_{kl}) = \begin{cases} 
        0, & \text{if } n \neq l \\
        |d_l|^2 f_{mn}\overline{f_{kl}}, & \text{if } n = l.
    \end{cases}
\end{equation}
\end{proposition}
\begin{proof}
    The element at row $i$ and column $j$ of the forward matrix of a 1-layer model is:
    \begin{equation}
        a_{ij} = d_j f_{ij},
    \end{equation}
    The covariance between two elements can be thus computed as
    \begin{align*}
    &\operatorname{Cov}(d_n f_{mn},d_l f_{kl}) \\
    & \qquad = \mathbb{E}\left[(d_n f_{mn}- \mathbb{E} [d_n f_{mn}])(\overline{d_l f_{kl} - \mathbb{E} [d_l f_{kl}])}\right] \\
    & \qquad = \mathbb{E} \left[ d_n f_{mn} \overline{d_l f_{kl}}\right] \ann{as $\mathbb{E}[d_i] = 0$}, \\
    \end{align*}
    which directly yields the result by the independence of the diagonal elements $d_i$.
\end{proof}
Proposition~\ref{theorem:1-layer} shows that the resulting matrix exhibits nonzero covariances within each column. This implies that each row of the sensing matrix $\mathbf{a}_i^\ctrans$ is strongly correlated with the others.
Since each measurement in $\measure$ is computed as the modulus of the inner product between the signal and one row of $\sensing$, these covariances heavily reduce the effective span of the sensing vectors, thereby making the problem more difficult.

To verify the theoretical derivations, we sample 10000 structured random matrices of size $4 \times 4$ and compute their empirical covariances between the 16 elements.
Figure~\ref{fig:covariance} illustrates the covariance matrices for different model depths. The index $i$ represents the element position by row $m = \lfloor i / N \rfloor$, where $\lfloor \cdot \rfloor$ is the floor operation, and column $n = i \mod N$. It can be seen that for 1 layer, the elements in the same column have a nonzero covariance.
\begin{figure}[t!]
    \centering
    \includegraphics[width=0.5\textwidth]{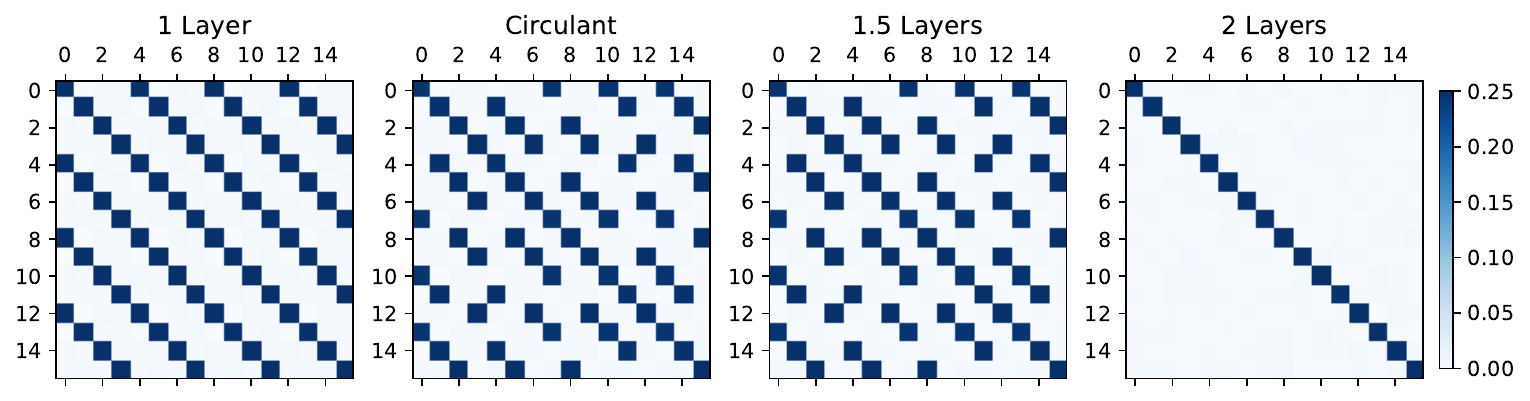}
    \caption{Covariance matrices of ($4 \times 4$) structured random models with different layers using the Fourier transform. Covariance is gradually removed by adding more depth, which produces uncorrelated elements with 2 layers.}
    \label{fig:covariance}
\end{figure}

\subsubsection{One and a Half Layers}\label{sec:cov-1p5}
\begin{proposition} \label{theorem:1p5}
For a 1.5-layer model, the forward matrix is left-circulant with asymptotically Gaussian-distributed elements of zero mean and variance $1/N$. The covariance pattern is given by
\begin{equation}
    \operatorname{Cov}(a_{mn},a_{kl})
    = \begin{cases} 
    \frac{1}{N}, & \text{if } ((m+n)-(k+l)) \operatorname{mod} N = 1 \\
    0, & \text{otherwise}. 
    \end{cases}
\end{equation}
\end{proposition}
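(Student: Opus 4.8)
The plan is to make the left-circulant structure explicit and then reduce the covariance between two matrix entries to the (co)variance of a single family of circulant coefficients. Writing the Fourier kernel as $f_{ij} = \frac{1}{\sqrt{N}}\,\omega^{ij}$ with $\omega = e^{-2\pi \mathrm{j}/N}$, I would first expand the $(m,n)$ entry of $\sensing = \trans \diag \trans$ as
\[
a_{mn} = \sum_{j} f_{mj}\, d_j\, f_{jn} = \frac{1}{N}\sum_{j} d_j\, \omega^{(m+n)j}.
\]
The two linear phases $\omega^{mj}$ and $\omega^{jn}$ merge into the single phase $\omega^{(m+n)j}$, so $a_{mn}$ equals a coefficient $c_s$ that depends on the indices only through $s = (m+n)\bmod N$. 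This is exactly the left-circulant property: $\sensing$ is constant along each anti-diagonal (with wraparound), carrying one common random coefficient $c_s$ per value of $s$.

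Next I would translate the entrywise covariance into a statement about the $c_s$. Because $\mathbb{E}[d_j]=0$, linearity gives $\mathbb{E}[c_s]=0$, hence $\operatorname{Cov}(a_{mn},a_{kl}) = \mathbb{E}[\,c_s \overline{c_t}\,]$ with $s=(m+n)\bmod N$ and $t=(k+l)\bmod N$. Expanding the product and invoking the independence and zero mean of the diagonal entries collapses the double sum over $(j,j')$ to its diagonal $j=j'$, on which $\mathbb{E}[|d_j|^2]=1$ by the energy-normalization assumption. What remains is the single geometric sum
\[
\mathbb{E}[\,c_s \overline{c_t}\,] = \frac{1}{N^2}\sum_{j} \omega^{(s-t)j}.
\]

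The decisive step is the evaluation of this sum by the orthogonality of the characters of $\mathbb{Z}/N\mathbb{Z}$: it equals $N$ when $s \equiv t \pmod N$ and vanishes otherwise. This simultaneously fixes the variance of each entry (the $s=t$ case yields $1/N$) and shows that the covariance is supported on a single residue class of $(m+n)-(k+l)$ modulo $N$, equal to $1/N$ there and $0$ elsewhere, which is the claimed pattern. All the work here is index bookkeeping modulo $N$, so the main care is to keep track of which residue class survives.

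I expect the genuinely nontrivial point to be the Gaussian marginal asserted in the statement, not the covariance. Each $c_s = \frac{1}{N}\sum_j d_j\,\omega^{(m+n)j}$ is a normalized sum of $N$ independent, bounded, zero-mean terms, so I would argue its limiting complex-Gaussian law via the Central Limit Theorem as $N \to \infty$, in the same spirit as the universality argument invoked earlier for the binary-phase diffusers. This makes the Gaussianity an asymptotic statement requiring finite variance of the $d_j$ and equidistribution of the phases, whereas the zero-mean, variance-$1/N$, and covariance conclusions hold exactly at finite $N$.
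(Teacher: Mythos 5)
Your proof is correct and follows essentially the same route as the paper's: expand $a_{mn}$ as a $d$-weighted sum of Fourier phases, use independence and $\mathbb{E}\left[|d_j|^2\right]=1$ to collapse the double sum to its diagonal, evaluate the resulting geometric sum by orthogonality of the characters of $\mathbb{Z}/N\mathbb{Z}$, and invoke the CLT for the asymptotic Gaussianity; your only presentational difference is making the left-circulant structure explicit upfront by merging the two phases into $\omega^{(m+n)j}$, where the paper reads it off from the final covariance pattern. Note that the support condition you derive, $((m+n)-(k+l)) \operatorname{mod} N = 0$, agrees with the paper's own computation (the geometric sum is nonzero only on the zero residue class, consistent with the $2$-layer proposition that follows), so the ``$\operatorname{mod} N = 1$'' in the statement as printed appears to be a typo rather than a gap in your argument.
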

\begin{proof}
    For 1.5 layers, the element at row $i$ and column $j$ of the forward matrix is
    \begin{equation} \label{eq:1p5}
        a_{ij} = \sum_{p=1}^{N} d_p f_{ip} f_{pj}.
    \end{equation}
    It is a summation of pseudosamples $d_p f_{ip} f_{pj}$, which consist of random samples $d_p$ and structured modulations $f_{ip} f_{pj}$. By the central limit theorem (CLT), this summation asymptotically follows a Gaussian distribution for large values of $N$. The mean of $a_{ij}$ is zero as each sample $d_p$ has zero mean.
    The covariance is computed as
    \begin{align*}
    & \operatorname{Cov}(\sum_{p=1}^{N} d_p f_{mp} f_{pn},\sum_{p=1}^{N} d_p f_{kp} f_{pl}) \\
    & \qquad = \mathbb{E} \left[ \sum_{p=1}^{N} d_p f_{mp} f_{pn} \overline{\sum_{q=1}^{N} d_q f_{kq} f_{ql}} \right] \ann{$\mathbb{E}[d_i]=0$}\\
    & \qquad = \mathbb{E} \left[ \sum_{p=1}^{N} |d_p|^2 f_{mp} f_{pn} \overline{f_{kp} f_{pl}} \right] \ann{$d_i$ is i.i.d.} \\
    & \qquad = \sum_{p=1}^{N} f_{mp} f_{pn} \overline{f_{kp} f_{pl}} \ann{$\mathbb{E}\left[ |d_p|^2 \right] = 1$} \\
    & \qquad = \sum_{p=1}^{N} \frac{1}{N^2} \mathrm{e}^{\j \frac{2\pi}{N} p(k+l-m-n)},
    \end{align*}
    where the last simplification results from (\ref{eq:dft}).
    The above covariance pattern shows that the forward matrix is left-circulant, with uncorrelated elements in each row. 
\end{proof}
Figure~\ref{fig:covariance} displays the covariance pattern of a 1.5-layer model, which indeed coincides with the pattern of a left-circulant matrix. Compared to a 1-layer, the covariances in the same column of the sensing matrix are destroyed, but new covariances between different columns are introduced.
\subsubsection{Two or More Layers} \label{sec:2layers}
\begin{proposition} \label{theorem:2-layer}
    For a model of depth $L \geq 2$ layers, the elements in the forward matrix emulate uncorrelated samples from an asymptotically Gaussian distribution with zero mean and variance $1/N$. The covariance is given by
    \begin{equation}
        \label{eq:theorem-2-layers}
        \operatorname{Cov}(a_{mn}, a_{kl})
        = \begin{cases} 
        \frac{1}{N}, & \text{if $m=k$ and $n=l$} \\
        0, & \text{otherwise}. 
        \end{cases}
    \end{equation}
\end{proposition}
\begin{proof}
We prove Proposition~\ref{theorem:2-layer} by induction. For $L = 2$ layers, the $i$th row and $j$th column element of the forward matrix is:
\begin{equation}
    a_{ij} = d'_j\sum_{p=1}^{N} d_p f_{ip} f_{pj}.
\end{equation}
Each element of the forward matrix is then a multiplication of an element of the 1.5-layer model with an extra random sample $d'_j$. The element $a_{ij}$ asymptotically follows a Gaussian distribution according to the CLT, with zero mean since the diagonal elements $d'_j$ themselves have zero mean. 
The covariance is computed as
    \begin{align*}
    & \operatorname{Cov}(d_n'\sum_{p=1}^{N} d_p f_{mp} f_{pn},d_l'\sum_{p=0}^{N} d_p f_{kp} f_{pl}) \\
    & \qquad = \mathbb{E} \left[ d_n'\overline{d_l'}\sum_{p=1}^{N} |d_p|^2 f_{mp} f_{pn} \overline{f_{kp} f_{pl}} \right] \ann{$d_i$ is i.i.d.} \\
    & \qquad = \mathbb{E}\left[d_n'\overline{d_l'}\right]\sum_{p=1}^{N} f_{mp} f_{pn} \overline{f_{kp} f_{pl}}  \ann{$\mathbb{E}\left[ |d_p|^2 \right] = 1$} \\
    & \qquad = \mathbb{E}\left[d_n'\overline{d_l'}\right] \sum_{p=1}^{N} \frac{1}{N^2} e^{\j \frac{2\pi}{N} p(k+l-m-n)} \ann{Equation (\ref{eq:dft})} \\
    & \qquad = \begin{cases} 
        \frac{1}{N}, & \text{if } ((m+n)-(k+l)) \operatorname{mod} N = 0 \; \text{and} \; n=l \\
        0, & \text{otherwise},
        \end{cases}
    \end{align*}
which gives~\eqref{eq:theorem-2-layers} as $((m+n)-(k+l)) \operatorname{mod} N = 0$ implies that $m + n = k + l$. 

We then assume that Proposition~\ref{theorem:2-layer} holds true for $L \geq 2$. 
For depth $L+0.5$, the corresponding model is $\sensing_{L+0.5} = \sensing_L \fourier$. As $a^{(L+0.5)}_{ij} = \sum_{p=1}^N a_{ip}^{(L)} f_{pj}$, each element also follows an asymptotic Gaussian distribution with zero mean in reason of the CLT. The covariance can be computed as
\begin{align*}
    & \operatorname{Cov}{(\sum_{p=1}^N a_{mp}^{(L)} f_{pn},\sum_{p=1}^N a_{kp}^{(L)} f_{pl})} \\
    & \qquad = \mathbb{E} \left[ \sum_{p=1}^{N} a_{mp}^{(L)} f_{pn} \overline{\sum_{q=1}^{N} a_{kq}^{(L)} f_{ql}} \right] \ann{$\mathbb{E}[a_{ij}^{(L)}]=0$} \\
    & \qquad = \begin{cases} 
        \sum_{p=1}^N \mathbb{E}\left[|a_{mp}^{(L)}|^2\right] f_{pn} \overline{f_{pl}}, & \text{if } m = k \\
        0, & \text{otherwise} 
        \end{cases} \ann{$a_{ij}^{(L)}$ uncorrelated} \\
    & \qquad = \begin{cases} 
        \frac{1}{N}, & \text{if } m = k \; \text{and} \; n = l \\
        0, & \text{otherwise}.
        \end{cases}
\end{align*}
The last equality follows from the fact that $a_{ij}^{(L)}$ has zero mean and variance $1/N$, and that the Fourier matrix $\trans$ is unitary.

For depth $L+1$, the model is $\sensing_{L+1} = \sensing_{L+0.5} \diag$, each element $a^{(L+1)}_{ij} = a^{(L+0.5)}_{ij} d_j$ also following an asymptotic Gaussian distribution with zero mean. The covariance can be computed as
\begin{align*}
    & \operatorname{Cov}{(a_{mn}^{(L+0.5)} d_{n}, a_{kl}^{(L+0.5)} d_{l})} \\
    & \qquad = \mathbb{E} \left[ a_{mn}^{(L+0.5)} \overline{a_{kl}^{(L+0.5)}} d_{n} \overline{d_l} \right] \ann{$\mathbb{E}[a_{ij}^{(L+0.5)}]=0$} \\
    & \qquad = \begin{cases} 
        \frac{1}{N} & \text{if } m = k \; \text{and} \; n = l, \\
        0 & \text{otherwise}.
        \end{cases}
\end{align*}
\end{proof}
The extra sample $d'_j$ in 2 layers further decouples the elements in different columns, destroying the remaining covariances in a 1.5-layer model. As verified in Figure~\ref{fig:covariance}, the covariance matrix of a 2-layer model is diagonal. 

\section{Methods} \label{sec:methods}
In this section, we describe the algorithms used to obtain the reconstruction from the measurements. 
We focus on gradient descent and spectral methods as they offer a good balance between performance and efficiency. Furthermore, we describe the experimental setup for the evaluations, including the signal and model configurations as well as algorithm and evaluation metric choices.   

\subsection{Gradient Descent} 
Gradient descent (GD) is a general-purpose iterative method for nonlinear optimization. The gradient of the loss function can often be derived in closed form. For instance, the gradient of the standard $\ell_2$ loss
\begin{equation}
  \mathcal{L}(\hat{\signal}) = \| |\sensing\hat{\signal}|^2 - \measure \|^2,
\end{equation}
is given by
\begin{equation}
\label{eq:l2_gradient}
    \mbox{{\boldmath{$\nabla$}}}_{\hat{\signal}} \mathcal{L} = - 2 \sensing^\ctrans\operatorname{diag}(\hat{\signal})(\measure-|\sensing \hat{\signal}|^2),
\end{equation}
where $\operatorname{diag}(\cdot)$ denotes the diagonal matrix whose diagonal is the argument vector. 
The evaluation of the gradient involves matrix multiplications with both the forward matrix $\sensing$ and its adjoint $\sensing^\ctrans$. In general, the complexity is determined by the structure of the forward model, while some momentum techniques, such as the Nesterov acceleration~\cite{nesterov1983method}, utilize the historical information collected throughout the iterates to further accelerate the convergence. 


Despite an abundance of available algorithms, GD lacks convergence guarantees for phase retrieval due to the theoretical difficulty introduced by the nonlinearity. To improve convergence in practice, we typically use spectral methods as an initialization of GD to improve its convergence.
\subsection{Spectral Methods} 
Spectral methods (SM) make for a class of reconstruction algorithms specific to the quadratic nature of phase retrieval. They aim to find the leading eigenvector of the covariance matrix~\cite{ma2021spectral,luo2019optimal,maillard2021construction}
\begin{equation}
    \label{eq:covariance_matrix}
    \covar_\mathcal{T} = \frac{1}{m} \sum_{i=1}^m \mathcal{T}(y_i) \mathbf{a}_i \mathbf{a}_i^\ctrans = 
    \sensing^\ctrans \text{diag}\left({\frac{\mathcal{T}(\measure)}{m}}\right) \sensing,
\end{equation}
where $y_i$ is the $i$th element of the measurements $\measure$, $\mathbf{a}_i^\ctrans$ the $i$th row of the sensing matrix $\sensing$, and $\mathcal{T}$ an increasing preprocessing function used to assign weights to the outer products $\mathbf{a}_i \mathbf{a}_i^\ctrans$. 
The intuition is that each measurement $y_i$ represents the magnitude of the inner product between $\signal$ and $\mathbf{a}_i$, and reflects their covariance. 
A larger $y_i$ suggests that $\mathbf{a}_i$ is more closely aligned with $\signal$ and should be assigned a heavier weight. 
Among all preprocessing functions, the optimal one for Gaussian i.i.d. random phase retrieval has been derived as $\mathcal{T}(y) = \text{max}(1 - 1/y, -b)$, where $b > 0$ defines a lower bound for negative eigenvalues~\cite{luo2019optimal}.
The covariance matrix $\mathbf{M}_\mathcal{T}$ is a second-order quantity that should provide a bias towards $\signal$ in its principal eigenvector.
Although SM can be applied to any forward-matrix configuration, theoretical and numerical studies have largely focused on the i.i.d. random case. 

In practice, the principal eigenvector is obtained through power iterations.
Starting from a random initial guess $\hat{\signal}_0$, we iteratively compute
\begin{equation}
    \hat{\signal}_{t+1} = \mathbf{M}_\mathcal{T} \hat{\signal}_t + 2 b\hat{\signal}_t,
\end{equation}
where $t$ is the iteration index. The trailing term $2 b \hat{\signal}_t$ serves as regularization against negative eigenvalues. While GD can be trapped in local minima, SM is guaranteed to converge to the original signal. 
\subsection{Experimental Setup}
\subsubsection{Model Setup}
We evaluate the performance of the 2-layer structured random model $\sensing = \unders \trans \diag_1 \trans \diag_2 \overs$, with $\fourier$ applied by the fast Fourier transform (FFT). 
The choice of the number of layers will be further investigated in Section~\ref{sec:depth}.
The phase of the diagonal elements of $\diag_1$ and $\diag_2$ are sampled uniformly between $0$ and $2 \pi$ and the diagonal elements of $\diag_1$ all have unit magnitude. 
To emulate i.i.d. random matrices, we sample the squared amplitudes of the diagonal elements of $\diag_2$ from the Marchenko-Pastur distribution, which describes the asymptotic eigenvalue distribution of $\mathbf{\mathbf{X}^\ctrans \mathbf{X}}$, where $\mathbf{X} \in \mathbb{C}^{m \times n}$ is an i.i.d. sampled Gaussian matrix.
With $\lambda_{\pm} = \sigma^2(1 \pm \sqrt{1/\alpha})^2$, where $\sigma^2$ is the variance of the sampling Gaussian distribution of $\mathbf{X}$. This distribution is given by
\begin{equation}
    \mathcal{MP}_\alpha(\lambda) = 
        \max(0, 1-\alpha) \delta(\lambda) + \mathbf{1}_{[\lambda_-, \lambda_+]}(\lambda) \nu(\lambda),
\end{equation}
in which $\delta$ is a Dirac distribution and $\mathbf{1}_{[\lambda_-, \lambda_+]}$ is the indicator function of $[\lambda_-, \lambda_+]$. The bulk distribution is
\begin{equation}
    \nu(\lambda) = \frac{\alpha \sqrt{(\lambda_+ - \lambda)(\lambda - \lambda_-)}}{2 \pi \sigma^2 \lambda}.
\end{equation}
\subsubsection{Signal Configuration}
We use the standard Shepp-Logan image of ($64 \times 64$) pixels as the phase information of the signal to be reconstructed. The grayscale pixel values are mapped linearly from $[0, 1)$ to a phase in $[-\pi, \pi)$. The magnitudes of the phase signal are set to be constant and equal to 1. In later sections, we consistently use the Shepp-Logan image as the phase image, unless otherwise specified, while we keep the image size.
\subsubsection{Algorithms Configuration}
We use GD and SM, either separately or together, with SM returning an initial guess for the GD iterations. 
We fix the hyperparameters as follows:
for SM, the preprocessing function is $\mathcal{T}(y) = \text{max}(1 - 1/y, - b)$ with regularization parameter $b = 10$;
the mean of $\mathbf{y}$ is renormalized to be 1; 
we also set a maximum of 5,000 power iterations with early stopping upon convergence.
For GD, the maximal number of iterations is 10,000 with early stopping, and the stepsize is chosen as $2/L$, where $L$ is an estimated upper bound of the local Lipschitz constant of the loss function computed at the initial guess. 
We let the numbers of iterations to ensure the convergence of each algorithm. 
Although the amplitude loss $\mathcal{L}(\hat{\signal}) = \| |\sensing\hat{\signal}| - \sqrt{\measure} \|^2$ is known to provide a more robust performance than the vanilla $\ell_2$ loss in practice~\cite{yeh2015experimental}, it provides the same performance for the noiseless setting in our case.
\subsubsection{Evaluation Metric}
Since the phase-retrieval problem is invariant to a global phase factor, in the sense that, both $\signal$ and $\signal \mathrm{e}^{\j \phi}$ will generate identical measurements, we use the cosine similarity metric to benchmark reconstruction accuracy. It is given by
\begin{equation}
    \label{eq:cosine_similarity}
    \cos(\truth, \guess) = \frac{\left | \truth^\ctrans \guess \right |}{\| \truth \| \cdot \| \guess \|},
\end{equation}
where $\truth$ is the ground-truth signal and $\guess$ the reconstructed signal. 
This metric computes the alignment between two signals as a scalar in the range $[0,1]$, where 0 indicates no alignment and 1 represents a perfect reconstruction. 
\subsubsection{Estimation of Signal Norms} \label{sec:norm-estimation}
As the principal eigenvector returned by SM always has unit norm, we only acquire the direction of the signal. 
However, the quality of the initial guess of the norm is important for the subsequent GD iterations. 
We estimate the $\ell_2$-norm of $\signal$ using the statistics of $\measure$ and $\sensing$ as
\begin{equation}
    \|\signal\|_2 \approx \sqrt{\frac{\mu_\measure}{\sigma_\sensing^2}},
\end{equation}
where $\mu_\measure$ and $\sigma_\sensing^2$ are the sample mean of $\measure$ and sample element-wise variance of $\sensing$, respectively.

A detailed derivation is as follows. The $i$th entry of $\measure$ can be written as
\begin{equation}
    y_i = \signal ^\ctrans \mathbf{a}_i \mathbf{a}_i^\ctrans \signal.
\end{equation}
Given that the rows $\mathbf{a}_i^\ctrans$ are independent, the measurements $\measure$ can be treated as independent random variables. Taking the expectation of $y_i$ yields that
\begin{align*}
    \mathbb{E}[y_i] &= \mathbb{E}[\signal ^\ctrans \mathbf{a}_i \mathbf{a}_i^\ctrans \signal] \\
    &= \mathrm{Var}[\mathbf{a}_i^\ctrans \signal] + \left|\mathbb{E}[\mathbf{a}_i^\ctrans \signal]\right|^2 \ann{variance property} \\
    &= \mathrm{Var}[\mathbf{a}_i^\ctrans \signal] \ann{$a_{ij}$ with zero mean}\\
    &= \mathrm{Var}[\sensing] \left\|\signal \right\|_2^2.
\end{align*}
\subsubsection{Code Availability}
The proposed forward models and the corresponding reconstruction algorithms are implemented using DeepInverse~\cite{tachella2025deepinverse}, an open-source computational imaging library. Our complete source code is made publicly available\footnote{\url{https://github.com/zhiyhucode/structured-random-phase-retrieval-v2}}.

\section{Results} \label{sec:results}
In this section, we demonstrate the reconstruction equivalence between dense and structured random models, followed by a computational benchmark that validates the efficiency of the structured random model.
For a better understanding of the model structure, a comparison on different model depths shows that at least 2 layers are indeed necessary to reproduce the results obtained with i.i.d. random models. 
We further demonstrate the practicality of the proposed model by showing its robustness to measurement noise, and its flexibility by introducing binary phase diffusers and other structured transforms.
Finally, we show that the performance of the model relies on the spectrum of the sensing matrix. 

\subsection{Reconstruction Performance}

\begin{figure}[t!]
    \centering
    \begin{subfigure}[b]{1.0\linewidth}
        \centering
        \includegraphics[width=1.0\textwidth]{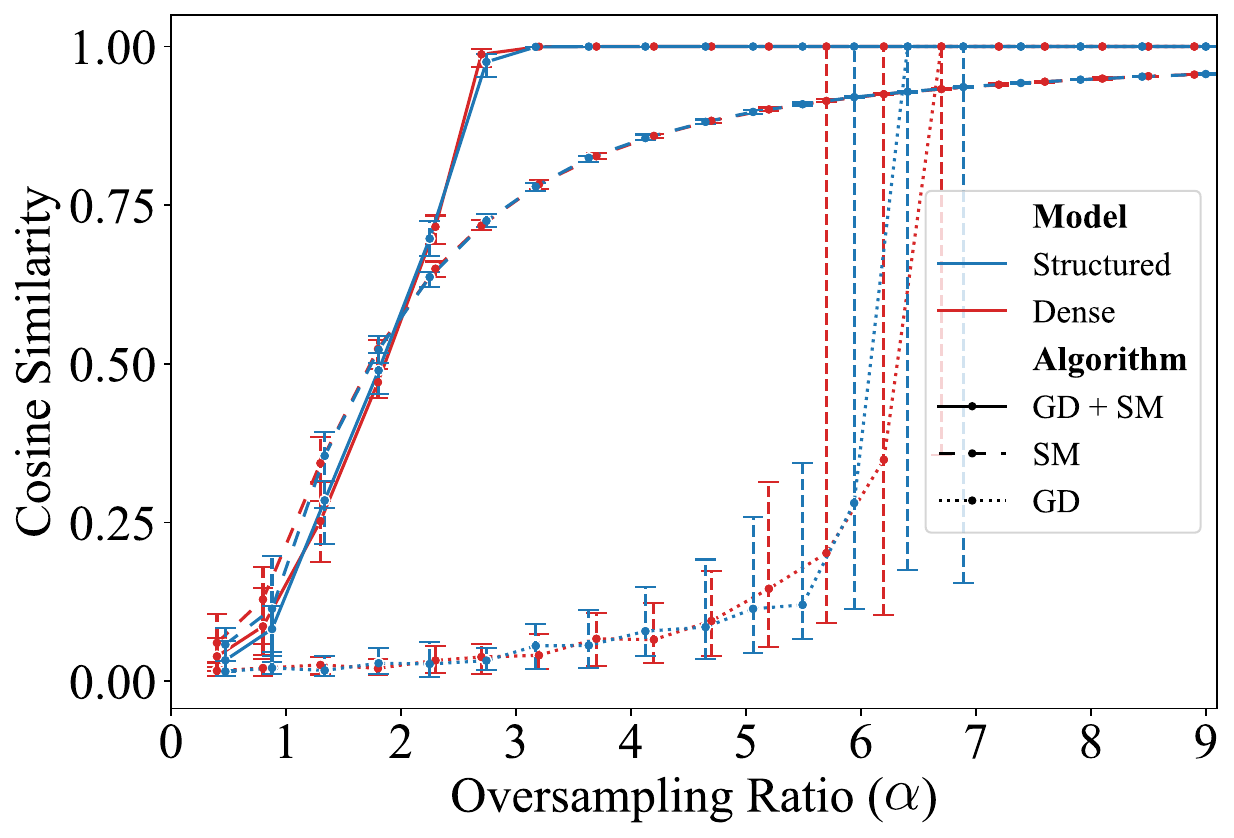}
        \caption{Reconstruction accuracy w.r.t the OR} 
        \label{fig:algo}
    \end{subfigure}
    \begin{subfigure}[b]{1.0\linewidth}
        \centering
        \includegraphics[width=1.0\textwidth]{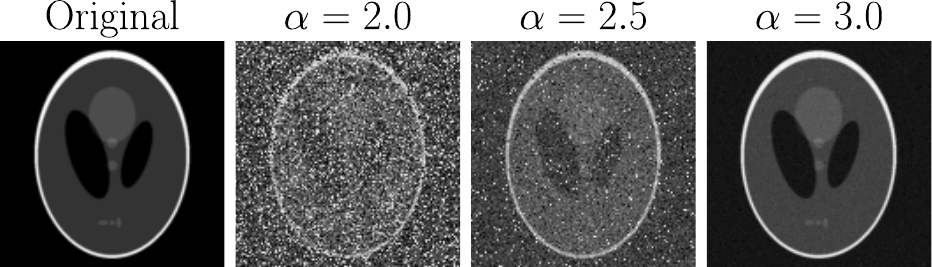}
        \caption{Reconstruction visualization using GD $+$ SM}
        \label{fig:viz}
    \end{subfigure}
    \caption{Reconstruction Comparison. Structured random models obtains the same accuracy as dense random models for each benchmarked algorithm, achieving a perfect recovery at an OR of 2.8.}
    \label{fig:main}
\end{figure}

We first evaluate the reconstruction performance with both dense and structured random models. Figure~\ref{fig:main} illustrates (a) the reconstruction performance w.r.t. the OR and provides (b) a few reconstructed images at several key ORs for structured random models.
The error bars represent the 10\%, 50\%, and 90\% percentiles among 50 runs, respectively. We show in Figure~\ref{fig:algo} that structured random models enable a robust reconstruction as good as in the case of dense random models using the same algorithms, which offers an applicable alternative without any performance compromise. 

Regarding the choice of algorithms, standalone SM provides a stable and satisfactory performance, while a subsequent run of GD further improves the results.
The perfect recovery retrieved at an OR of 2.8 is comparable to the 2.03 threshold by AMP, and below the typical ORs of 4 in classic phase-imaging modalities. Moreover, standalone GD with random initialization performs significantly worse than GD with SM initialization, thus, unfavorable in practice. This arises from the highly nonconvex loss landscape induced by the phase-retrieval problem, which causes GD to become trapped in local minima. SM mitigates this difficulty by providing a suitable initialization in the attraction basin of the solution.

As illustrated in Fig.~\ref{fig:viz}, a reconstruction under the OR of 2.0 outlines the general structure of the original object. One can capture more fine-grained details by increasing the OR to 2.5 and, eventually, enabling a perfect recovery at an OR of 3.0.

\subsection{Computation Benchmark}
\begin{figure}
    \centering
    \includegraphics[width=1.0\linewidth]{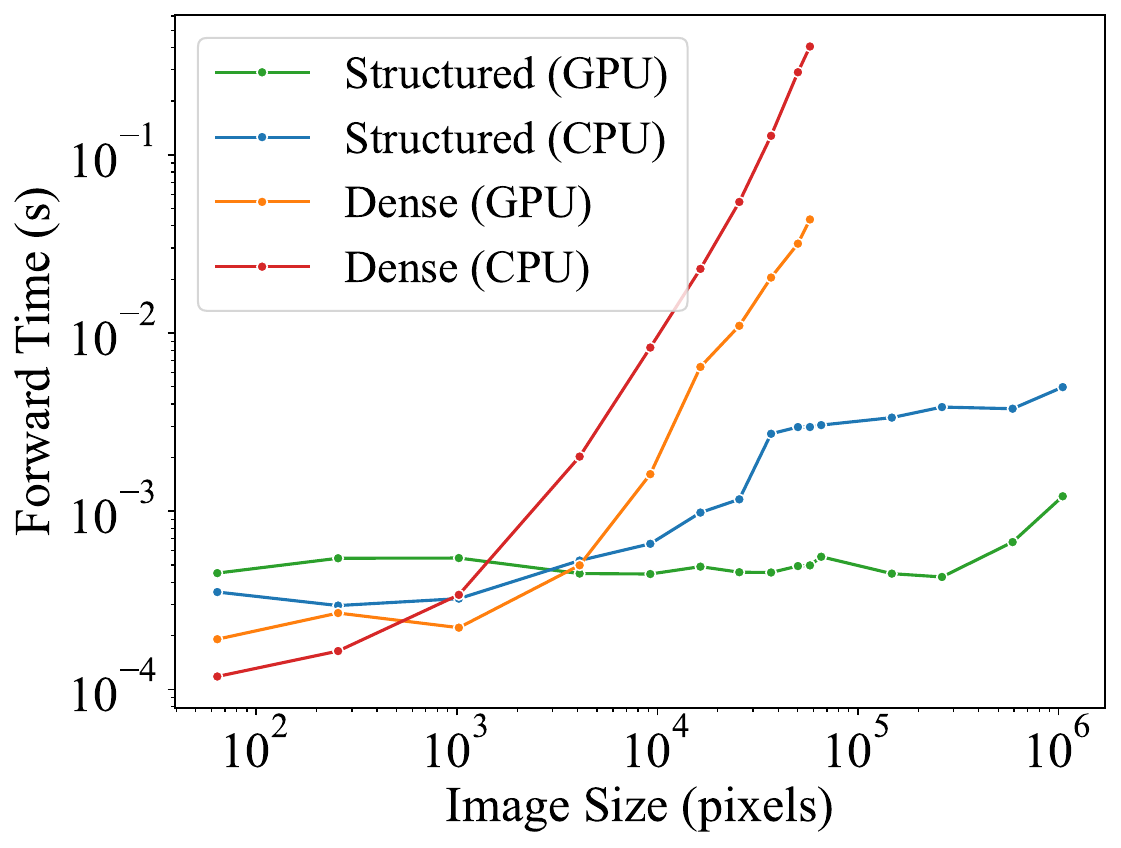}
    \caption{Time benchmark of the forward pass of dense and structured random models under an OR of 1. Structured random models reach nearly constant forward time on the GPU for moderate image sizes, whereas dense random models quickly become prohibitively slow.}
    \label{fig:time}
\end{figure}
To demonstrate the speed advantage of structured random models, we benchmark the forward time (i.e., the time to evaluate $\measure = \left|\sensing \signal\right|^2$) for dense and structured random models w.r.t. the size of the input.
The forward pass is evaluated during each iteration of the reconstruction algorithms and is the determining factor for the computational complexity.

We report in Figure~\ref{fig:time} the average over 100 runs of the forward time of both models for an OR of 1 on CPU Intel Xeon Gold 6240 and GPU NVIDIA Tesla V100 SXM2.
Due to GPU parallelization, the structured random model reaches a nearly constant forward time for moderate image sizes (fewer than $10^6$ pixels, which corresponds to a (1,024$\times$1,024) image). In comparison, the forward time of a dense random model resembles a quadratic curve, which matches the theoretical prediction and becomes excessively large already for small image sizes with $10^4$ pixels. For CPU, the forward time of the structured model remains manageable for small image sizes with fewer than $10^5$ pixels, whereas the dense random models are prohibitively slow.
For sizes smaller than $10^3$ pixels, overheads make the structured models a few times slower than the dense ones, although both computational costs are negligible at these modest sizes. 

Our benchmark for dense random models is limited to image sizes smaller than $10^5$ pixels, in reason of the excessive memory requirements of dense matrices. Those scale quartically w.r.t. the number of pixels per edge. For instance, a dense random matrix would consume 17.18 GB for a ($256 \times 256$) image. 
At the same image size, the storage needed for the proposed structured random model is around 1 MB, as only the diagonal matrices need to be stored. This is an improvement of a factor 17,000.

\subsection{Depth} \label{sec:depth}
\begin{figure}[t!]
    \centering
    \includegraphics[width=1.0\linewidth]{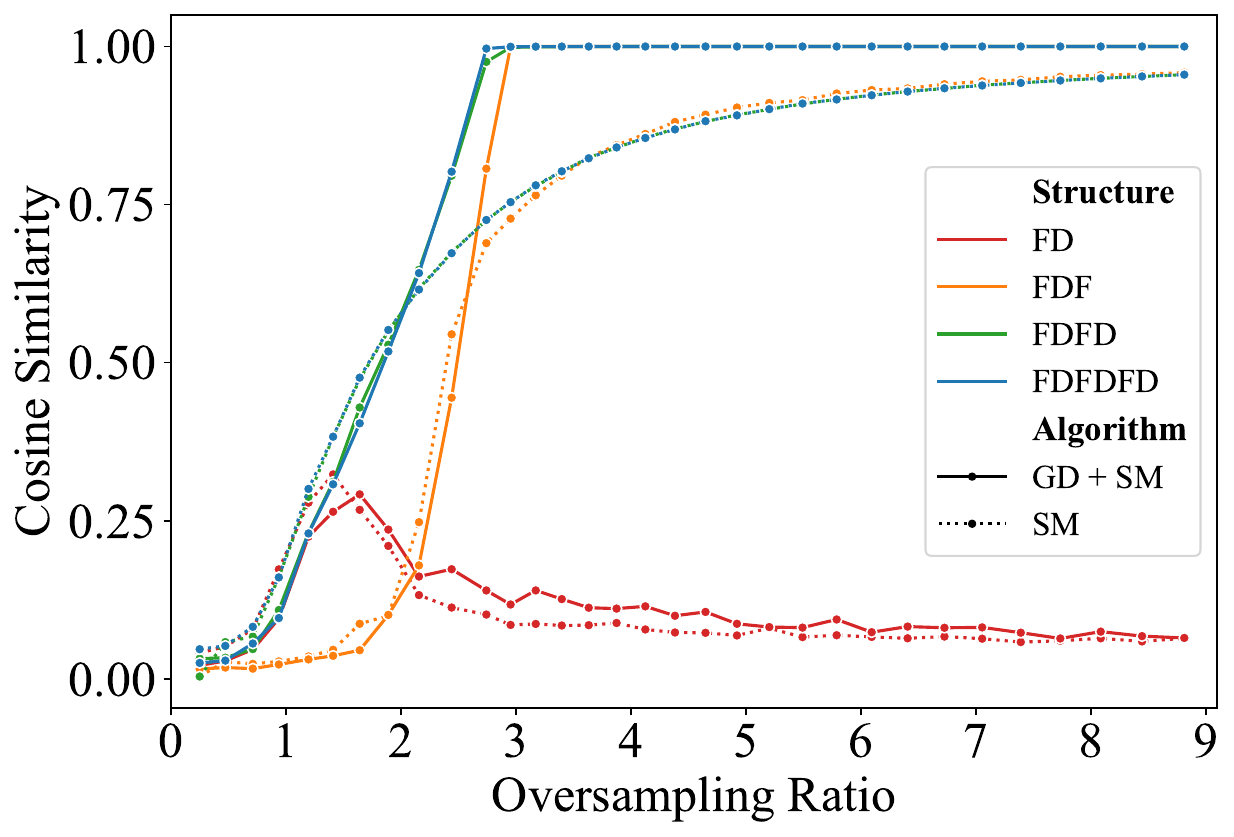}
    \caption{Performance of structured random models with different depths. A 1-layer model is unable to reconstruct, regardless of its OR, while a 1.5-layer model yields suboptimal performance. A model with more than 2 layers produces the same performance as a dense random model.
    }
    \label{fig:layers}
\end{figure}
As the complexity of the imaging system grows with the model depth, we investigate how the reconstruction performance is affected by the number of structured layers. Figure~\ref{fig:layers} illustrates the performance of the model with different depths.
For all models, the amplitude of the rightmost diagonal elements is sampled according to the Marchenko-Pastur distribution, while all the other diagonal elements have a unit amplitude. 
We conclude from the results that a 1-layer model fails to produce informative measurements for meaningful recovery, regardless of the OR. Meanwhile, the 1.5-layer model enables reasonable reconstructions, although its performance consistently falls short compared to the 2-layer case across all ORs. Furthermore, all models with 2 or more layers yield a performance equivalent to the i.i.d. Gaussian case, as seen in Figure~\ref{fig:algo}. These observations suggest that an optical system with 2 layers is sufficient to emulate i.i.d. random models in practice.

\subsection{Input-Dependent Reconstruction of One and a Half Layers}\label{sec:1p5-exp}
\begin{figure}[t!]
    \centering
    \includegraphics[width=0.5\textwidth]{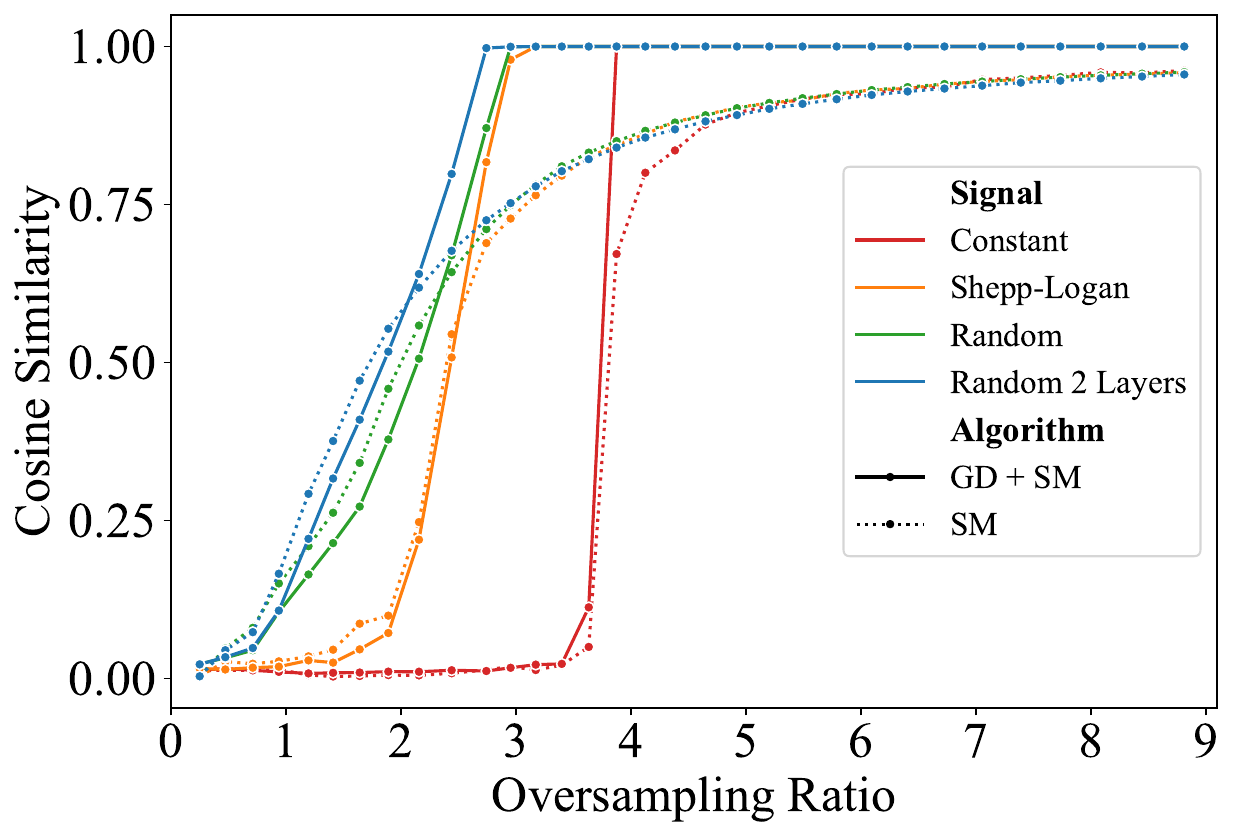}
    \caption{Performance comparison between different signals with the 1.5-layer structured random model. The performance improves when the input signal has a larger phase variance. The performance approaches that of a 2-layer model when the input signal is random.}
    \label{fig:1-5layer}
\end{figure}
As shown in Section~\ref{sec:cov-1p5}, the covariance pattern of the sensing matrix resembles a left-circulant matrix.
The reconstruction in this case depends on the input signal to recover.
We report in Figure~\ref{fig:1-5layer} the reconstruction performance of several signals in the 1.5 layer case. 
Beside the Shepp-Logan signal, we use a constant signal where all phases are zero, and a random signal whose phases are uniformly distributed in $[-\pi, \pi)$. We conclude from Figure~\ref{fig:1-5layer} that a better performance can be obtained if one increases the variance of the phase distribution of the signal. Specifically, the performance for a random signal with 1.5 layers is close to that with 2 layers. 
We explain this behavior by expressing the signal in the Fourier domain as $\signal = \trans^\ctrans \tilde\signal \trans$, such that the measurements become $\measure = |\trans \diag_1 \tilde\signal \trans|^2$. 
Ideally, one would like $\diag_1 \tilde\signal$ to retain all the random elements of $\diag_1$.
However, for a constant signal---or, more generally, for weak-phase objects---the Fourier transform $\tilde\signal$ is sparse and the measurements $\measure$ depend on very few diagonal elements of $\diag_1$.

\subsection{Binary Phase Distribution} \label{sec:alt-phase}
To explore the generality of our structured random model, we experiment with three phase distributions for the phase $\theta_j$ in the diagonal elements $d_j = r_j \mathrm{e}^{\mathrm{j} \theta_j}$:
\begin{enumerate}
    \item Uniform: uniform distribution on $[-\pi, \pi)$;
    \item Laplace: distribution of $\operatorname{arctan2}(y,x)$, where $x$ and $y$ are drawn from a standard Laplace distribution;
    \item Binary: uniform distribution in $\{0, \pi \}$.
\end{enumerate}

\begin{figure}[t!]
    \centering
    \begin{subfigure}{0.5\textwidth}
        \centering
        \includegraphics[width=1.0\textwidth]{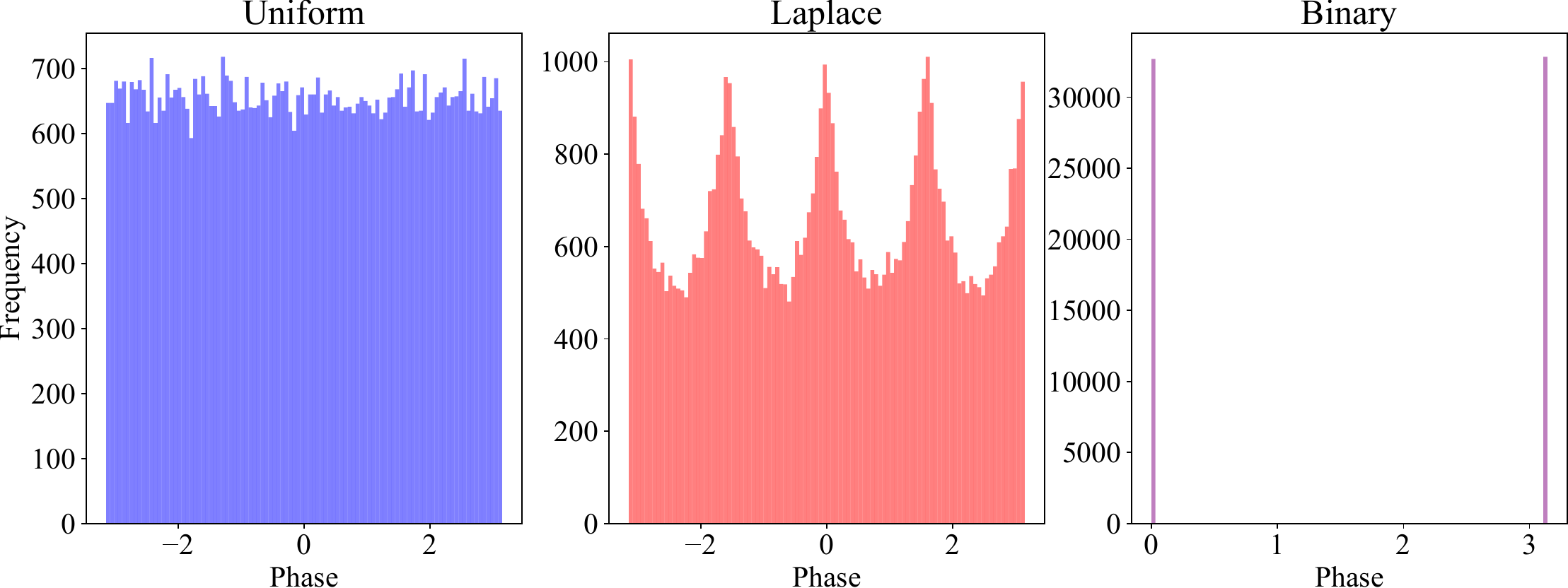}
        \caption{Alternative phase distributions}
        \label{fig:diag_distribution}
    \end{subfigure}
    \centering
    \begin{subfigure}{0.5\textwidth}
        \centering    
        \includegraphics[width=1.0\textwidth]{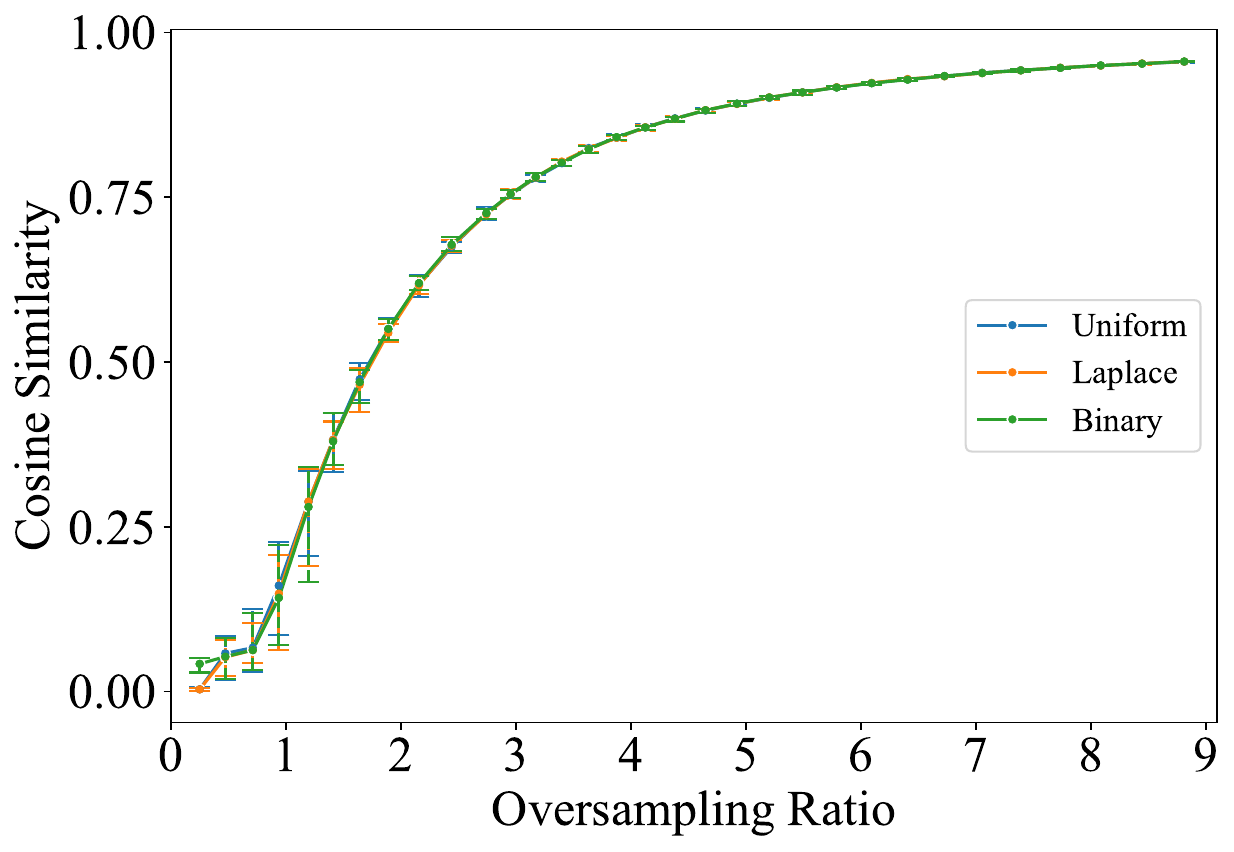}
        \caption{Performance of the alternative phase distributions}
        \label{fig:phase}
    \end{subfigure}
    \caption{Performance comparison for different phase distributions, including the uniform, Laplace, and binary distributions. Fig. (a) depicts the distributional difference between the alternatives. Fig. (b) shows that the reconstruction performance remains identical despite the distributional difference.}
    \label{fig:uni-phase}
\end{figure}
We provide in Figure~\ref{fig:diag_distribution} a histogram of a realization of a phase the follows these three distributions and observe that they are substantially different.
Despite this, we see in Figure~\ref{fig:phase} that all options produce the same result. In particular, the binary distribution can greatly simplify the manufacture of the diffusers as it corresponds to only 2 discrete values. Additionally, the diagonal matrices at different layers can share identical elements, which would allow all diffusers to be manufactured from the same template.

The observed universality can be understood if one considers that the diagonal elements are effectively summed through the Fourier transform. Due to the central limit theorem, this summation leads the resulting matrix elements to approach a complex-valued Gaussian distribution, independently of the specific distribution of the diagonal entries.
\subsection{Robustness to Noise}
To examine the robustness of the structured random model, we apply additive white Gaussian noise to the clean measurement and set
\begin{equation}
    \measure = \max(| \sensing \signal |^2 + \boldsymbol{\epsilon}, 0),
\end{equation}
where $\boldsymbol{\epsilon} \sim \mathcal{N}(\mathbf{0}, \sigma_{\boldsymbol{\epsilon}}^2 \mathbf{I})$ with $\sigma_{\boldsymbol{\epsilon}} = \eta \mu_{| \sensing \signal |^2}$ and $\eta \in [0, 1]$ controls the strength of the noise relative to the average of the clean measurements $\mu_{| \sensing \signal |^2}$.
\begin{figure}
    \centering
    \includegraphics[width=1.0\linewidth]{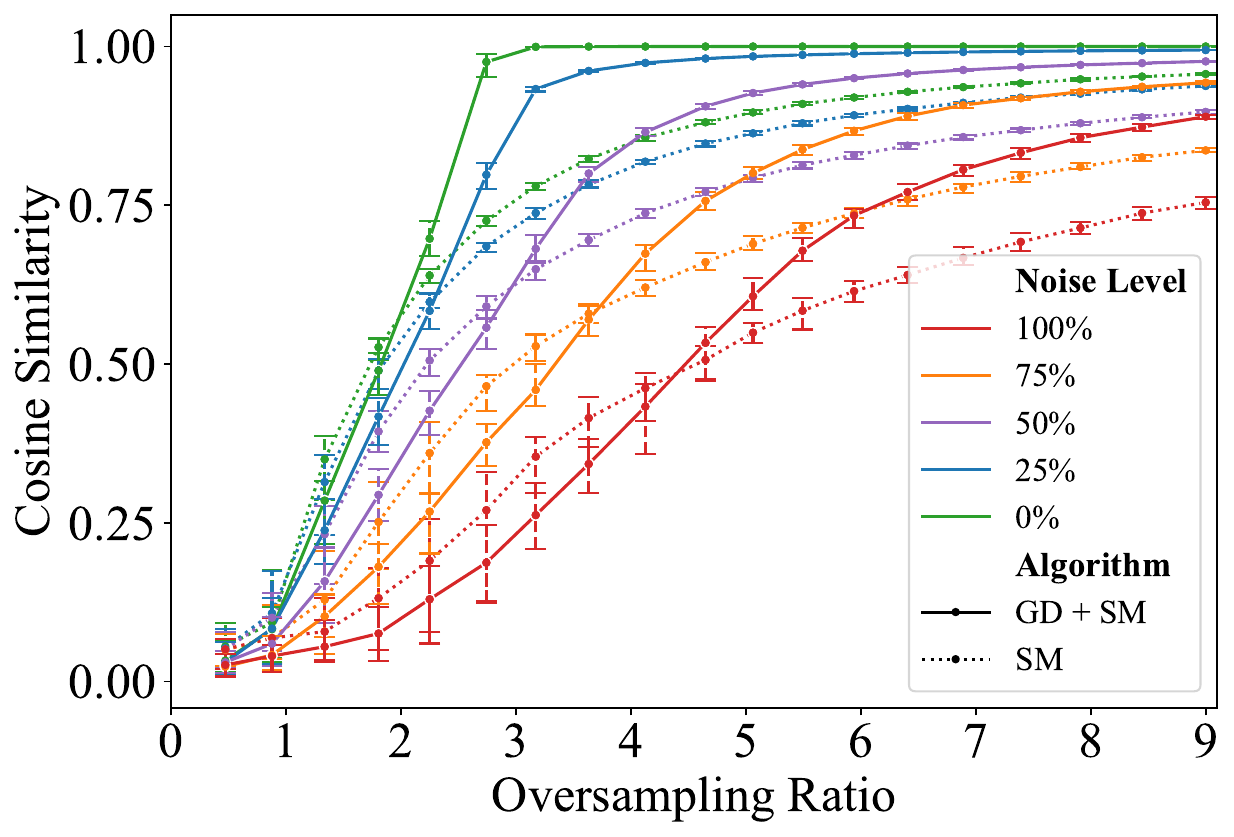}
    \caption{Performance of the 2-layer structured random model under different noise levels. A good reconstruction is maintained at an OR of 4 under a noise level of 50\% of the average of the clean measurements.
    }
    \label{fig:noise}
\end{figure}
Fig.~\ref{fig:noise} displays the reconstruction performance of a 2-layer structured random model under different noise levels. The model exhibites strong robustness to noise, maintaining a steady reconstruction even with a noise of 50\% of the average of the clean measurements. At higher noise levels, the performance at typical ORs (3\,--\,4) is considerably reduced but can be improved by increasing the OR to 6\,--\,7.


\subsection{Spectrum}\label{sec:spectrum}
\begin{figure}[t!]
    \centering
    \includegraphics[width=1.0\linewidth]{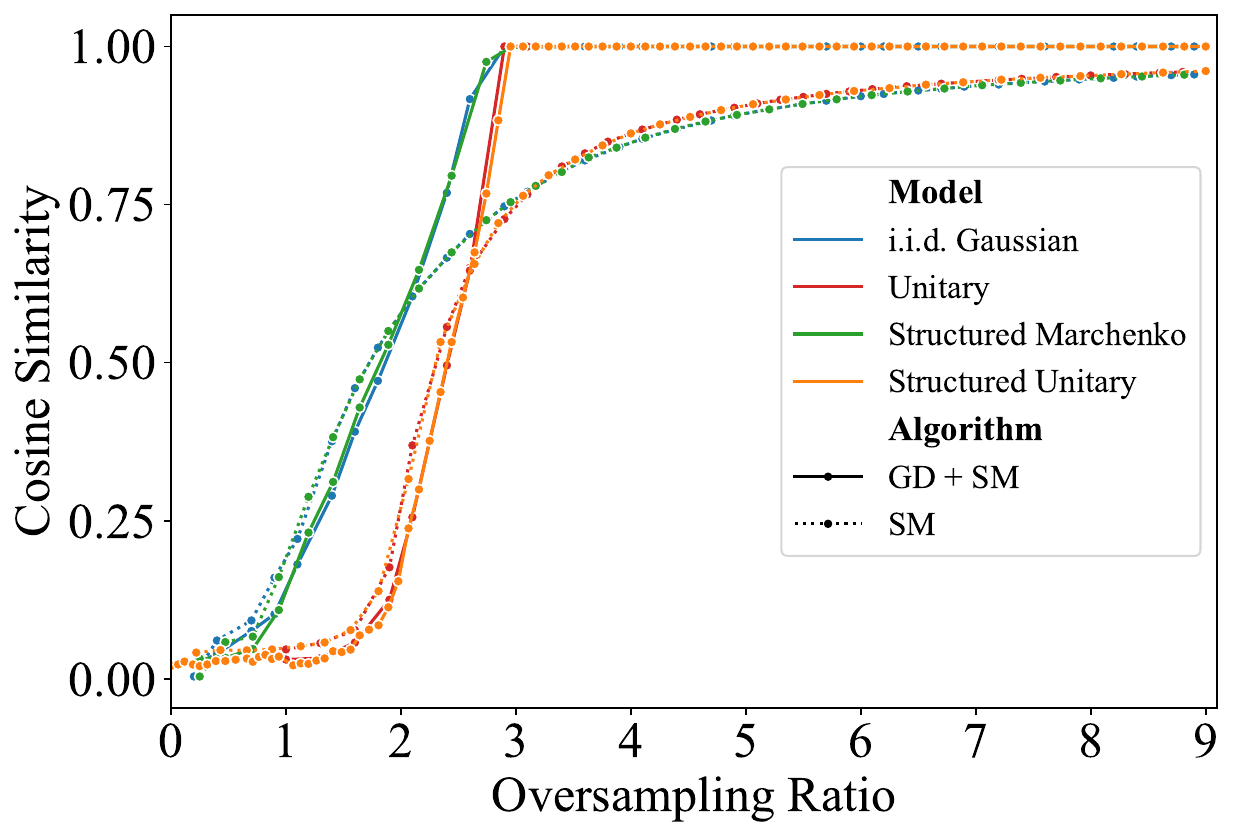}
    \caption{Performance comparison between the structured and dense random models under different singular spectra. Structured models can yield the same performance as dense models with the same spectrum.}
    \label{fig:spectrum}
\end{figure}
We observe that the spectrum is the determining factor that characterizes the performance of structured and dense random models. For the 2-layer forward matrix $\sensing = \trans_1 \diag_1 \trans_2 \diag_2 \overs$ with OR\,$>1$ and $\diag_1$ a unit-magnitude configuration, the spectrum can be extracted through the Gramian
\begin{align*}
    \sensing^\ctrans \sensing &=  \overs^\ctrans \diag_2^\ctrans \trans_2^\ctrans \diag_1^\ctrans \trans_1^\ctrans \trans_1 \diag_1 \trans_2 \diag_2 \overs \\
    &= \operatorname{diag}(\mathbf{r}_2^2)_{1:n}, \quad \text{($\trans_1, \diag_1, \trans_2$ are all unitary)}
\end{align*}
where $\mathbf{r}_2$ represents the magnitude of the diagonal elements of $\diag_2$ and $\operatorname{diag}(\cdot)_{1:n}$ denotes the ($n \times n$) part of a diagonal matrix on the top-left corner. This shows that the eigenvalues of the Gramian are the first $n$ elements of $\mathbf{r}_2^2$, which are equal to the squared spectrum of the 2-layer model. Therefore, a 2-layer model can easily emulate dense Gaussian or unitary matrices, with the Marchenko-Pastur distribution or units for $\mathbf{a}_2$, respectively. We compare in Figure \ref{fig:spectrum} the performance of the structured random models with Marchenko-Pastur distribution for $\diag_2$ or all unitary diagonals with their dense counterparts. 
We observe that the structured setups match the performance of the dense models. This also explains the low dependence of the performance on the phase distribution, since the latter does not alter the spectrum.

In the undersampled case, the spectrum is obtained from $\sensing \sensing^\ctrans$, which retains the same nonzero support as the sampling distribution of $\mathbf{r}_2$ beside the expected ($n-m$) vanishing singular values, as guaranteed by the Cauchy interlacing theorem~\cite{magnus2019matrix}. This distributional shift has little effect on overall performance. 

The direct configuration also generalizes to the spectrum of any matrix, for instance, products of i.i.d. Gaussian matrices. A product of $N$ random Gaussian matrices, $\sensing \in \complex^{m \times n}$, is defined as
\begin{equation}
    \sensing = \prod_{i=1}^N \mathbf{X}_i,
\end{equation}
where $\mathbf{X}_N \in \complex^{m \times n}$ and $\mathbf{X}_i \in \complex^{m \times m}, \forall i \in [1,\ldots,N-~1]$ are i.i.d. sampled from a complex Gaussian distribution $\mathcal{CN}(0,1/m)$.
\begin{figure}[t!]
    \centering
    \begin{subfigure}{0.5\textwidth}
        \centering
        \includegraphics[width=0.8\textwidth]{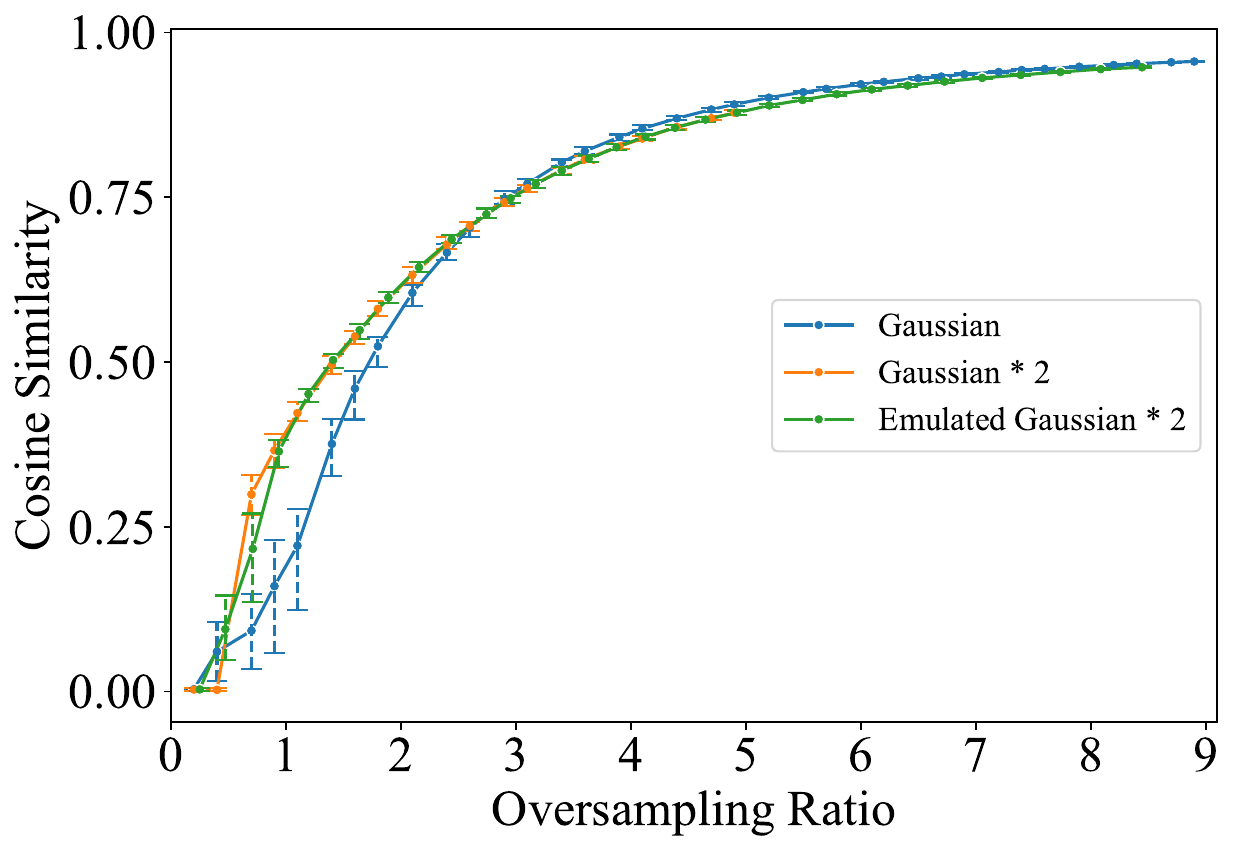}
        \caption{2 Gaussians}
        \label{fig:control2}
    \end{subfigure}

    \begin{subfigure}{0.5\textwidth}
        \centering
        \includegraphics[width=0.8\textwidth]{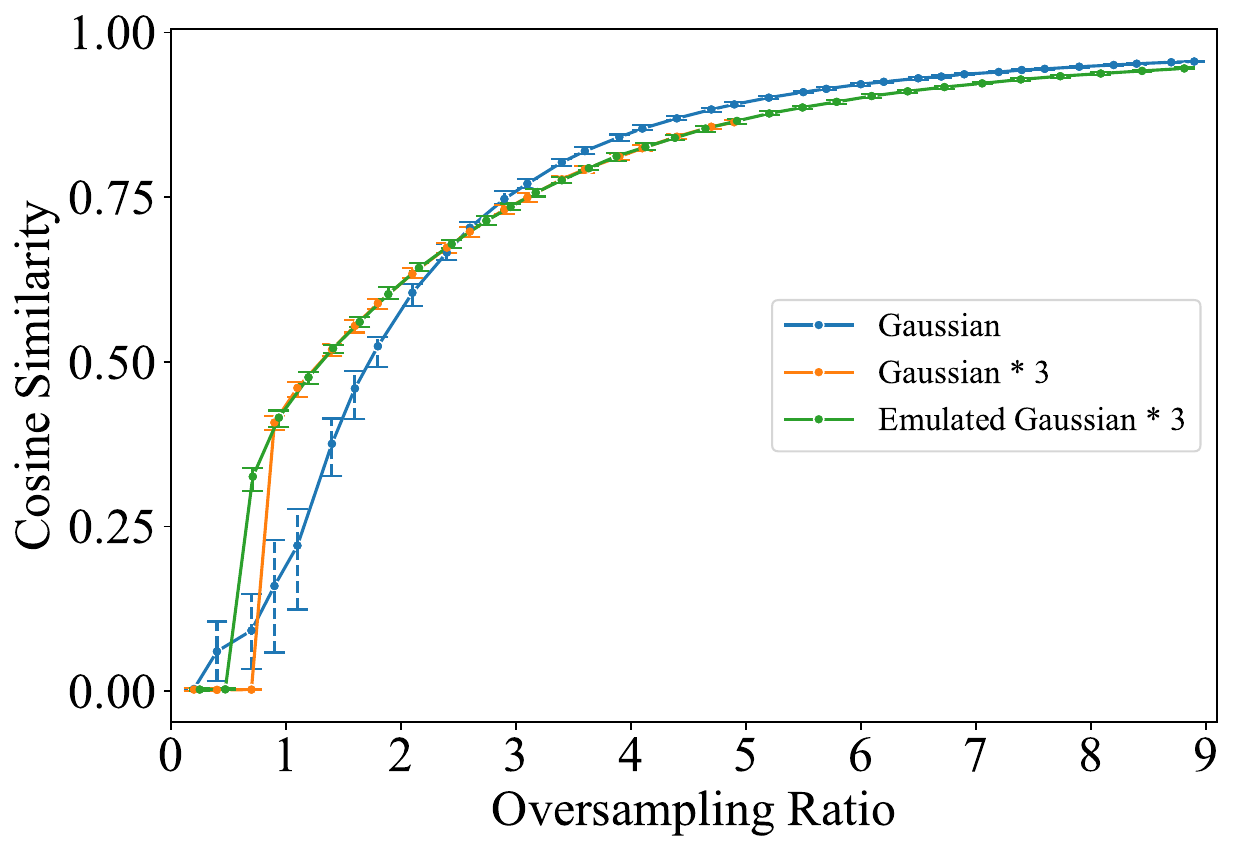}
        \caption{3 Gaussians}
    \label{fig:control3}
    \end{subfigure}

    \begin{subfigure}{0.5\textwidth}
        \centering
        \includegraphics[width=0.8\textwidth]{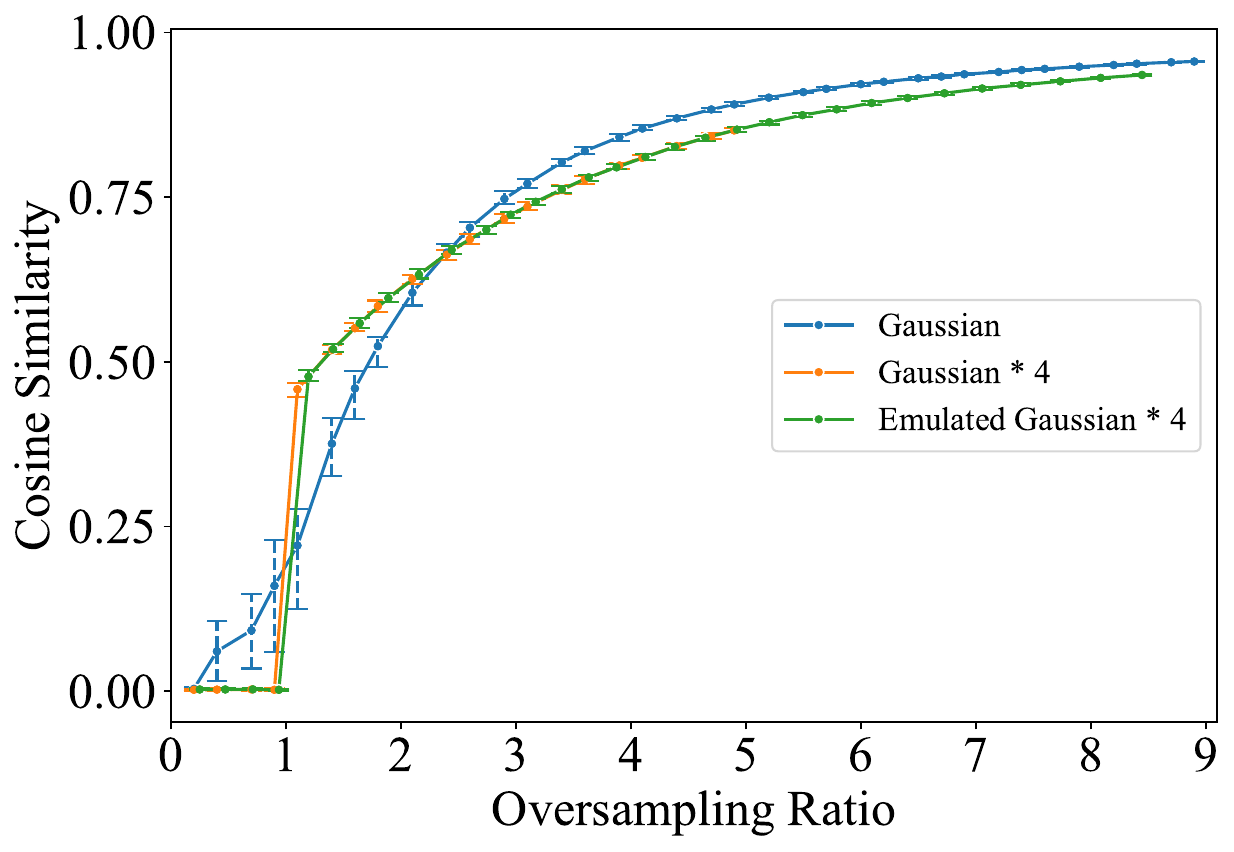}
        \caption{4 Gaussians}
        \label{fig:control4}
    \end{subfigure}
    
    \caption{Performance of products of Gaussian models. Structured models can emulate dense models by using the same spectrum.}
    \label{fig:control}
\end{figure}
We compare in Figure \ref{fig:control} the performance of the dense random model using a product of 2 to 4 Gaussian matrices to that of their structured random counterparts using the spectra from the dense models. We conclude that the structured random model can lead to the same results as the dense random model when oversampling. Notably, the results are better than that of i.i.d. models at low ORs. When OR\,$<$\,1, the performance differs as undersampling alters the spectrum. This generalizes the use of structured random model to any dense random matrix, as they need no more than the sole knowledge of the spectrum.
\subsection{Alternative Transforms}
In addtional to the FFT, in this section, we investigate two more unitary structured transforms: the discrete cosine transform (DCT) and the Hadamard transform (HT). Like the FFT, these transforms can also be evaluated in $\mathcal{O}(n \log n)$ time, and can be treated as real-valued proxies for the FFT.
\begin{figure}[t!]
    \begin{subfigure}{0.5\textwidth}
        \centering
        \includegraphics[width=0.8\textwidth]{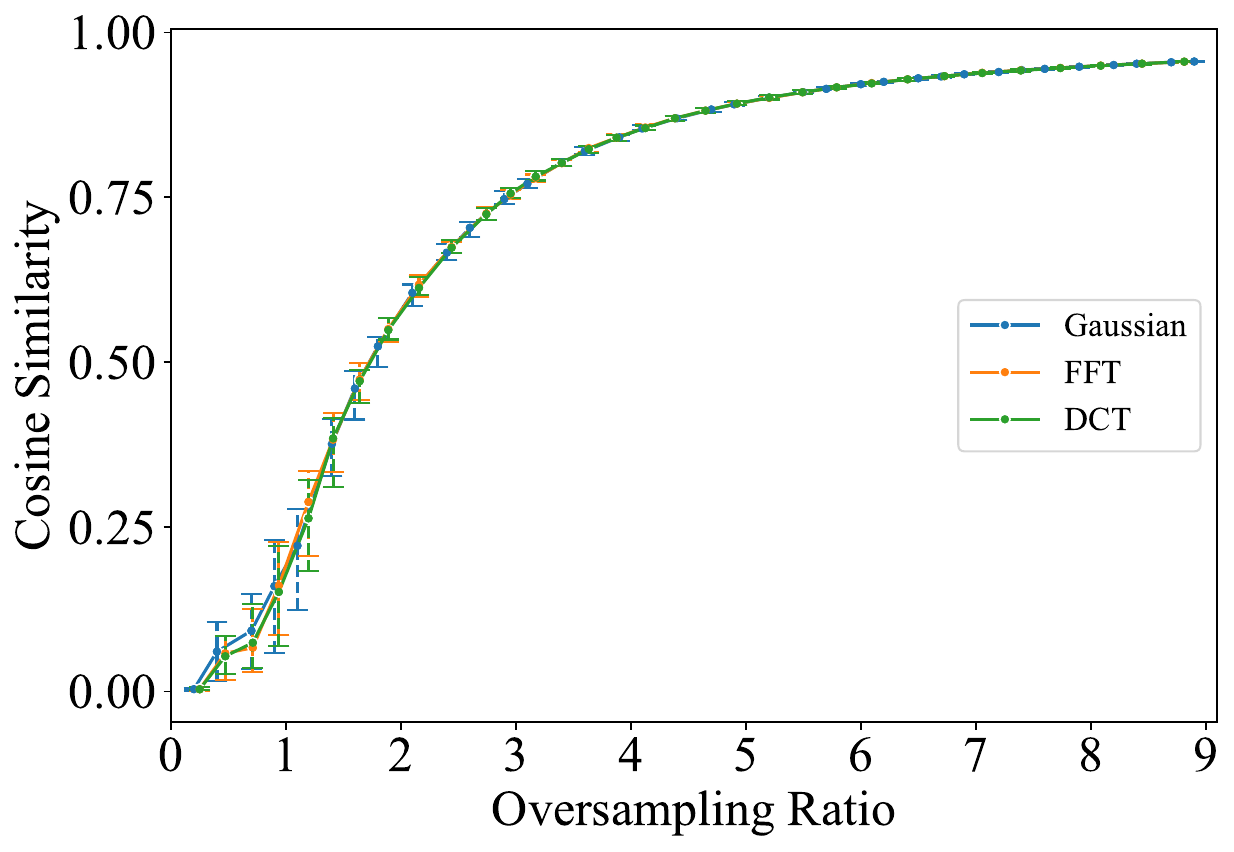}
        \caption{Alternative transforms;}
        \label{fig:trans}
    \end{subfigure}

    \begin{subfigure}{0.5\textwidth}
        \centering
        \includegraphics[width=0.8\textwidth]{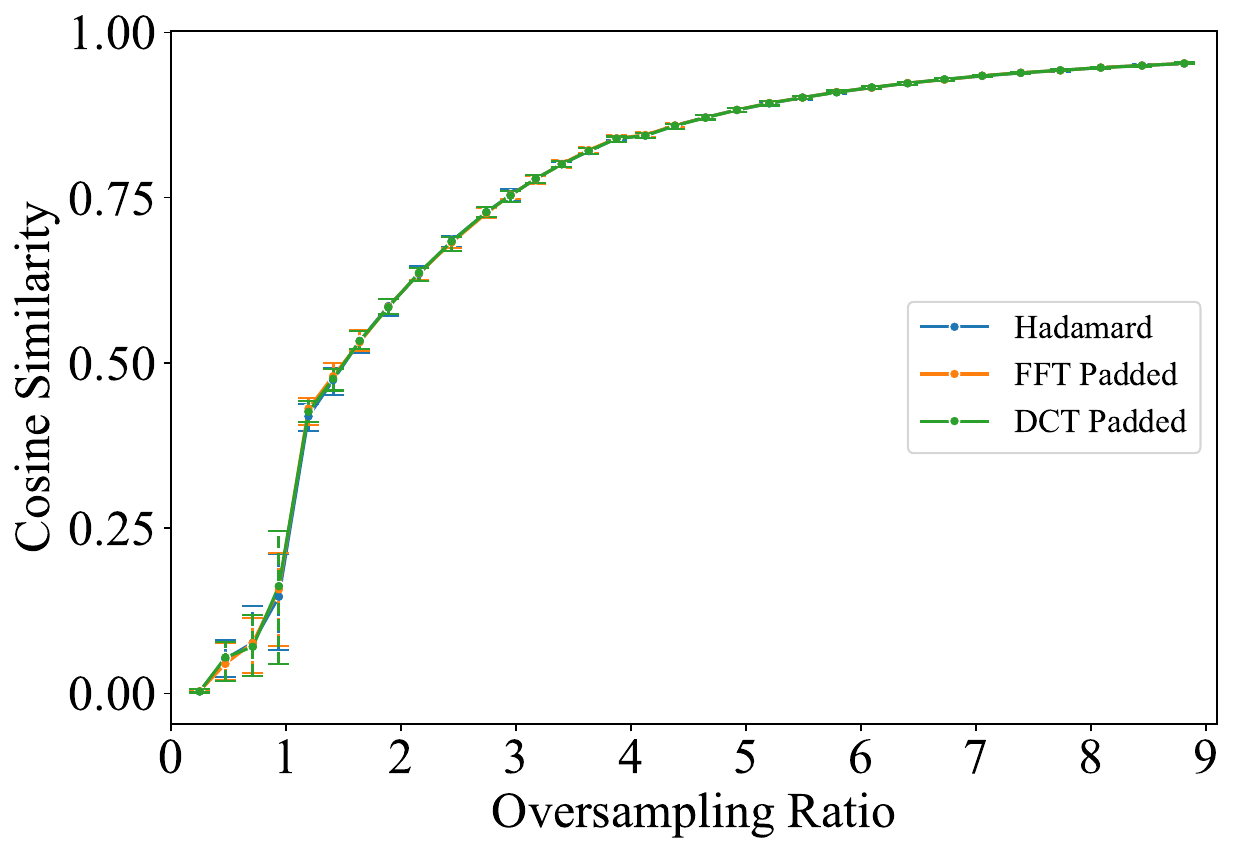}
        \caption{Alternative transforms with additional padding;}
        \label{fig:trans_pad}
    \end{subfigure}

    \caption{Performance comparison of the models using FFT, DCT and HT. All transforms yield the same performance.}
    \label{fig:uni-trans}
\end{figure}
We show in Figure~\ref{fig:trans} the results of the FFT and DCT. Both transforms yield identical results. For the HT, additional padding is needed as the HT only admits the input dimension to be $p = 2^k, k \in \mathbb{N}$. To control this factor, we apply the same over- and undersampling for the FFT, DCT, and HT, which leads to the performance that we report in Figure~\ref{fig:trans_pad}.
All employed transforms generate identical results.

\section{Conclusion} \label{sec:outro}
We have proposed to replace dense random models, traditionally and in phase retrieval, by a structured random model.
This allows us to obtain the same reconstruction accuracy as dense models at a fraction of their original computational costs. Our model is straightforward to realize with practical lenses and diffusers. The diffusers need only a random binary phase pattern and can share the same realization. Strong robustness to the measurement noise has also been demonstrated. For an optimal architecture, we have shown that two layers are the necessary and sufficient setup for the optimal performance. They are required to generate uncorrelated and asymptotically Gaussian-distributed elements for the overall forward matrix. Further, we have shown that the spectrum of the forward matrix determines the performance of the model. By configuring the spectrum through certain diagonal matrices, we can emulate any dense model.
As a result, alternative phase distributions and structured transforms can be used to obtain the same performance, as they maintain the same spectrum. 
Our work contributes to the growing body of applications that leverage efficient transforms~\cite{ailon2009fast, dong2020reservoir, yu2016orthogonal}, and holds a significant potential for computational accelerations across diverse domains. Finally, our proposed model provides a natural testbed for the exploration of more sophisticated reconstruction algorithms~\cite{ducotterd2025undersampled}, including accelerated first-order methods with classic regularization~\cite{beck2009fast,chambolle2016introduction} and learned regularization approaches~\cite{jin2017deep,devalla2018drunet}.

\bibliography{refs}

\bibliographystyle{IEEEtran}

\end{document}